\documentclass[10pt,journal,doublecolumn]{IEEEtran}
\IEEEoverridecommandlockouts
\usepackage{algpseudocode}
\usepackage{algorithm}
\usepackage{bm}
\usepackage{amsfonts}
\usepackage{amssymb}
\usepackage{amscd}
\usepackage{graphics}
\usepackage[dvips]{graphicx}
\usepackage{epsfig}
\usepackage{subfigure}
\usepackage[cmex10]{amsmath}
\interdisplaylinepenalty=2500
\usepackage{array}
\usepackage{eqparbox}
\usepackage{flushend}
\usepackage{hyperref}
\usepackage{fancyhdr}
\usepackage{titling}
\usepackage{enumerate}
\usepackage{stackrel}
\usepackage{pdflscape}
\usepackage{color}
\usepackage{longtable}
\usepackage{siunitx}
\usepackage{booktabs}

\DeclareMathAlphabet{\mathbsf}{OT1}{cmss}{bx}{n}
\DeclareMathAlphabet{\mathssf}{OT1}{cmss}{m}{sl}
\DeclareMathAlphabet{\mathcsf}{OT1}{cmss}{sbc}{n}

\newcommand{\ie}{{\em i.e.}}
\newcommand{\etc}{{\em etc}}

\newcommand{\secref}[1]{Section~\ref{#1}}
\newcommand{\figref}[1]{Fig.~\ref{#1}}
\newcommand{\tabref}[1]{Table~\ref{#1}}
\newcommand{\thrmref}[1]{Theorem~\ref{#1}}

\newcommand{\corolref}[1]{Corollary~\ref{#1}}

\newcommand{\specialcell}[2][c]{%
  \begin{tabular}[#1]{@{}c@{}}#2\end{tabular}}

\makeatletter
\def\blfootnote{\xdef\@thefnmark{}\@footnotetext}
\makeatother

\hyphenation{op-tical net-works semi-conduc-tor}
\newtheorem{theorem}{Theorem}[section]

\newtheorem{corollary}[theorem]{Corollary}

\newenvironment{proof}[1][Proof]{\begin{trivlist}
\item[\hskip \labelsep {\bfseries #1}]}{\end{trivlist}}

\newcommand{\qed}{\nobreak \ifvmode \relax \else
      \ifdim\lastskip<1.5em \hskip-\lastskip
      \hskip1.5em plus0em minus0.5em \fi \nobreak
      \vrule height0.75em width0.5em depth0.25em\fi}

\def\BibTeX{{\rm B\kern-.05em{\sc i\kern-.025em b}\kern-.08em
    t\kern-.1667em\lower.7ex\hbox{E}\kern-.125emX}}

\fancyhead[L]{\textbf{Submitted to Transactions on Communications}}

\usepackage{newlfont}
\hypersetup{pageanchor=false}
\begin{document}
\title{An Approximately Optimal Algorithm for Scheduling Phasor Data Transmissions in Smart Grid Networks}
\author{\IEEEauthorblockN{K. G. Nagananda and Pramod Khargonekar, \emph{Fellow}, \emph{IEEE}}\thanks{K. G. Nagananda is with People's Education Society (PES) University, Bangalore 560085, INDIA, E-mail: \texttt{kgnagananda@pes.edu}. Pramod Khargonekar is with the University of Florida, Gainesville, FL 32611, U.S.A. E-mail: \texttt{ppk@ece.ufl.edu}}}

\pagenumbering{gobble}
\date{}
\setlength{\droptitle}{-0.5in}
\maketitle

\begin{abstract}
In this paper, we devise a scheduling algorithm for ordering transmission of synchrophasor data from the substation to the control center in as short a time frame as possible, within the realtime hierarchical communications infrastructure in the electric grid. The problem is cast in the framework of the classic job scheduling with precedence constraints. The optimization setup comprises the number of phasor measurement units (PMUs) to be installed on the grid, a weight associated with each PMU, processing time at the control center for the PMUs, and precedence constraints between the PMUs. The solution to the PMU placement problem yields the optimum number of PMUs to be installed on the grid, while the processing times are picked uniformly at random from a predefined set. The weight associated with each PMU and the precedence constraints are both assumed known. The scheduling problem is provably NP-hard, so we resort to approximation algorithms which provide solutions that are suboptimal yet possessing polynomial time complexity. A lower bound on the optimal schedule is derived using branch and bound techniques, and its performance evaluated using standard IEEE test bus systems. The scheduling policy is power grid-centric, since it takes into account the electrical properties of the network under consideration.
\end{abstract}
\begin{IEEEkeywords}
Scheduling, NP-hard, approximation algorithms, hierarchical information flow, phasor data transmission.
\end{IEEEkeywords}
\vspace{-0.25in}
\section{Introduction}\label{sec:introduction}
In a typical electric power grid, the communications infrastructure is centered around communications between the individual substations and control centers. In the existing framework, this communications structure has disadvantages in that it offers slow automatic control by the control center and even slower manual control by system operators \cite{Hauser2005}. To overcome such drawbacks, installing GPS satellite-synchronized PMUs \cite{Phadke2008} across the grid which can record phasor data at very high frequencies are proposed. As per the year 2014, 1100 PMUs had been installed across the U.S. Eastern Interconnect offering substantial coverage of the transmission system \cite{2014}, while China has been following large-scale implementation of PMUs for wide area monitoring \cite{Yang2007}.

The high frequency of operation of PMUs results in accumulation of the voluminous phasor data at the substations for further processing and transmission. This is especially true with the proposition by the electric utilities to install large numbers of PMUs, resulting in overwhelming the existing communications network on the grid. The problem was highlighted in a vision paper on the transmissions infrastructure for the future grid \cite{Bose2010}, where a hierarchical communications architecture for realtime wide area monitoring and control was examined. Basically, the following architecture was envisioned: Phasor data at the substations form the first layer of information. Grid status routers form the interfacing layer between the substations and control centers, which form the higher level of information. Phasor data is transmitted from the substations to control centers via status routers over dedicated communication channels (see \cite[Section I B]{Bose2010}). This architecture is preferred over the more traditional redundant communications architecture (see \cite[Fig. 3-3]{2010b}), which presents problems related to management of phasor data especially with the increasing numbers of PMUs on the grid.

In this paper, we base our study within the realtime hierarchical communications infrastructure. Essentially, we concern ourselves with the following problem: Devise an algorithm to schedule transmission of phasor data (collected from a network of PMUs) from the substation to the control center in as short a time frame as possible, given the hierarchical information-flow architecture proposed in \cite{Bose2010}. An overview of the problem setup is presented in the following paragraph.

We consider a power network with $B$ buses and $K$ branches. There are $N < B$  PMUs  in  the grid, which could, for example, arise from optimal PMU placement algorithms  \cite{Baldwin1993} - \nocite{Xu2004} \nocite{Nuqui2005}\nocite{Gou2008}\nocite{Gou2008a}\nocite{Zhang2010}\nocite{Kekatos2012}\cite{Li2013}. Time is divided into frames with the duration of each frame equal to $t$ units.
\begin{figure}[t]
\centering
\includegraphics[height=1.25in,width=3.5in]{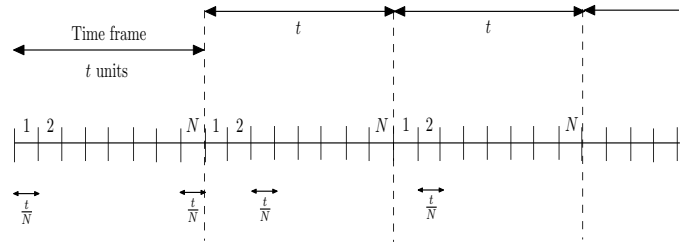}
\caption{Time division multiplexing of PMU transmission.}
\label{fig:pmu_trans1}
\end{figure}
A time frame is further divided into $N$ slots each of duration $\frac{t}{N}$ time units, as illustrated in \figref{fig:pmu_trans1}. Within a time frame, the substation (which has gathered phasor data from the $N$ PMUs) transmits to the control center via the grid status routers on dedicated channels of finite capacity. Given this setup, a transmission schedule for phasor data from the substation is proposed, so that the control center can use the \emph{ordered} set of phasor data to more quickly and more reliably determine changes in the system state. Specifically, referring to \figref{fig:pmu_trans1}, the fundamental question that we ask in this paper is the following: Would it be possible to minimize the processing time of phasor data from the $N$ PMUs across the grid, {\ie}, to devise a scheduling policy that allows phasor data from the $N$ PMUs to be transmitted from the substation in the specified order in \emph{as small a time frame $t$ as possible}?

The question posed in the previous paragraph is answered by casting the scheduling problem in the framework of single machine job scheduling with precedence constraints \cite{Pinedo2012}. In simple words, we come up with a schedule to process phasor data from the optimal $N$ PMUs in as small a time frame $t$ as possible, under the precedence constraints between the PMUs; the precedence constraints are imposed to quantify the importance or significance (in terms of electrical connectedness) of a PMU compared to other PMUs in the network. Since the problem is provably NP-hard, we address it via approximation algorithms, which provide solutions that are suboptimal yet possessing polynomial-time complexity; this component is the main contribution of the paper. The scheduling analysis presented herein is applicable without modifications to any configuration of PMU placement with precedence constraints of any form.

We now discuss the role of precedence constraints in the scheduling problem. In the hierarchical information flow architecture, a network of PMUs transmit to a central monitoring station, where phasor data is archived for further processing. Employment of PMUs in backup protection schemes has been a popular practice \cite{Jiang2002} - \nocite{Eissa2010}\nocite{Ma2011}\cite{Neyestanaki2015}.  Consider a wide-area monitoring application with a fault detection and isolation scenario. It may be that faults on certain important transmission lines have much larger impact on the power grid than other transmission lines. In this case, it is important that the PMU data from the nodes corresponding to these transmission lines be reported to the central monitoring station earlier than data from other nodes. In another scenario, if phasor data from a particular PMU is found to be anomalous (transmission lines breaking, unexpected power demand trend, cyber attacks, unauthorized remote access to substation databases, {\etc}.), then it might need to be assigned a higher precedence to be transmitted sooner to the control center (for timely monitoring and control) compared to phasor data from other PMUs in the network. In this case, the precedence constraints can change during the operation which is a more aggressive scenario than the one considered in this paper. The main point is that it is quite likely that application requirements will lead to precedence constraints on data transmission from various PMUs in the grid. In this paper, for analytical purposes, the precedence constraints are assumed known.

In \cite{Nagananda2014a}, the first author and his colleagues devised a PMU scheduling scheme for transmission of phasor data from a network of $N$ PMUs to the control center without a specific communications architecture, as taken into account in the present paper. The objective there was to improve the performance of a fault detection scheme using scheduling, by accumulating phasor data from ``all'' the $N$ PMUs to execute the fault detection procedure. This was clearly undesirable for realtime processing, since the scheme incurred a delay of $(N-1)t/N$ time units per frame for each PMU which can trigger a hazardous situation even for moderate values of $N$ and $t$. In this paper, we are motivated by realtime aspect of the hierarchical information-flow architecture proposed in \cite{Bose2010}. The problem considered in this paper is especially relevant when it is required to gather and analyze synchrophasor data for real-time decision making and wide area monitoring.

Communications scheduling in power networks has been addressed from various angles. On the one hand, there are algorithms for power scheduling, while on the other hand, scheduling policies have been devised for multimedia data, user-access for smart power appliances, {\etc}. For instance, a power scheduling scheme for the smart grid was proposed in \cite{Zhou2012} to improve the quality of experience (QoE); the QoE represents the customers' degree of satisfaction. A multi-time scheduling scheme was presented in \cite{He2013} in the framework of Markov decision processes for traditional and opportunistic energy users, under supply uncertainty due to variable and non-stationary wind generation, demand uncertainty owing to the stochastic behavior of a large number of opportunistic users and the coupling between sequential decisions across multiple timescales. A scheduling scheme for smart grid traffic (control commands, multimedia sensing data and meter readings) in a priority-based manner, aided by wireless sensor networks and cognitive radio technology, was presented in \cite{Huang2013}. In \cite{Chen2013}, a joint access and scheduling policy was proposed for in-home appliances (schedulable and critical) to coordinate the power usage to keep the total energy demand for the home below a target value.

Contrasting to work in the existing literature, the focus of this paper is scheduling, or ordering, transmissions of phasor data from the substations to the control center on the power network by devising a feasible schedule in as small a time frame as possible. We are not concerned with power and grid-traffic scheduling that have been widely addressed by the smart grid community. To the best of our knowledge, this work and the one presented in \cite{Nagananda2014a} are the first results on phasor data scheduling reported in the literature.

The remainder of the paper is organized as follows. In \secref{sec:problem_statement}, a mathematically precise problem statement is presented. \secref{sec:solution_methodology} comprises the procedure to compute the weights associated with the PMUs, and the details of the approximation strategy to address the scheduling problem. The scheduling scheme is presented in \secref{sec:scheduling_scheme}. An illustrative example along with the experimental results are provided in \secref{sec:experimental_results}. Conclusion and possible avenues for future research are provided in \secref{sec:conclusion}.

\vspace{-0.1in}
\section{Problem statement}\label{sec:problem_statement}
Consider the set $\mathcal{N} = \{1,\dots,N\}$ of $N$ PMUs installed on the power grid. The network of PMUs transmit phasor data to the substation where it will be archived for local processing and global operations (such as wide area monitoring, {\etc}.) \cite{Bose2010}. Phasor data from each PMU $n \in \mathcal{N}$ has a positive processing time, $p_n > 0$ and a nonnegative weight $w_n \geq 0$; for purpose of analysis, the weights are assumed to be known. Let $C_n$ denote the time at which phasor data from the $n^{\text{th}}$ PMU is processed completely by the control center in a feasible schedule. The scheduling scheme is required to satisfy precedence constraints, which are specified in the form of a directed acyclic graph $\mathcal{G} = (\mathcal{N}, \mathcal{P})$ where $(i, j) \in \mathcal{P}$ implies that the phasor data of $\text{PMU}_i$ must be completely processed by the control center before the substation can transmit data of $\text{PMU}_j$. The main objective is to find a feasible schedule for the substation to transmit phasor data so as to minimize the weighted sum of the processing times, which is given by
\begin{eqnarray}
\sum_{n\in \mathcal{N}}w_nC_n.
\label{eq:weighted_protime}
\end{eqnarray}
Preemption and control-center idle time are forbidden and hence phasor data all the PMUs should be processed in the interval $[0, t]$, where $t = \sum_{n \in \mathcal N} p_n$. Preemption refers to a temporary interruption of processing, while idle time refers to the lapse between the end-time of processing phasor data of $\text{PMU}_i$ and the start-time of processing data of $\text{PMU}_j$.

In the classic setting, the aforementioned problem falls in the framework of single-machine job scheduling with precedence constraints \cite{Pinedo2012}.  The machine scheduling problem was shown to be strongly NP-hard \cite{Lawler1978}, \cite{Schuurman1999}, and has received considerable attention both from a theoretical viewpoint \cite{Chekuri1999} - \nocite{Correa2005}\cite{Adolphson1977} and for devising practical strategies \cite{Ibaraki1994} - \nocite{Chudak1999}\cite{Potts1980}. Given the difficulty in devising efficient polynomial-time algorithms for optimally solving NP-hard problems, it is the usual practice to resort to polynomial-time suboptimal schemes, referred to as approximation algorithms \cite{Vazirani2003}. The approximation is optimal up to a small constant factor (for example, within 2\% of the optimal solution). An $\alpha$-approximation algorithm runs in polynomial time and produces for every instance a feasible schedule of cost at most $\alpha$ times that of an optimal schedule. The value $\alpha$ is called the performance guarantee, or integrality gap, of the algorithm. With the increasing demand for installing a very large number of PMUs on the power grid, and given the societal importance of the problem, we will develop power grid-centric approximation algorithms to efficiently solve the scheduling problem posed in \eqref{eq:weighted_protime}.

\section{Solution methodology}\label{sec:solution_methodology} 
In this section, we elaborate on the approximation strategy to solve the scheduling problem posed in the optimization setup \eqref{eq:weighted_protime}. which is the subject topic of this paper. As mentioned in the earlier sections, we assume knowledge of the weights $w_n$, $n \in \mathcal{N}$, and precedence constraints between the PMUs.

We first present a 0-1 programming formulation of the optimization setup \eqref{eq:weighted_protime}. We define a variable $\delta_{nm}$ as follows:
\begin{eqnarray}
\delta_{nm} = \begin{cases}
                1,~\text{if}~\text{PMU}_n~\text{is processed before}~\text{PMU}_m \\
                0,~\text{otherwise}.
              \end{cases}
\label{eq:delta_define}
\end{eqnarray}
Let $\gamma_{nm} = 1$ when the precedence constraints specify that $\text{PMU}_n$ is a predecessor of $\text{PMU}_m$ and let $\gamma_{nm} = 0$ otherwise. The processing of $\text{PMU}_m$ occurs at time $\sum_{n}p_n\delta_{nm} + p_m$. Therefore, \eqref{eq:weighted_protime} can be written as
\begin{eqnarray}
\min \sum_n\sum_m p_n\delta_{nm}w_m &+& \sum_m p_m w_m \label{eq:lp_relax1} \\
\text{subject to}~ \delta_{nm} &\geq& \gamma_{nm} \label{eq:lp_relax_constraint1} \\
\delta_{nm} + \delta_{mn} &=&  1 \label{eq:lp_relax_constraint2} \\
\delta_{nm} + \delta_{mk} + \delta_{kn} &\geq& 1 \label{eq:lp_relax_constraint3}\\
\delta_{nm} &\in& \{0, 1\} \label{eq:lp_relax_constraint4} \\
\delta_{nn} &=& 0, \label{eq:lp_relax_constraint5}
\end{eqnarray}
$n = 1,\dots,N$, $m = 1,\dots,N$, $k = 1,\dots,N$, $n \neq m$, $n \neq k$, $m \neq k$. The constraint \eqref{eq:lp_relax_constraint1} specifies that $\delta_{nm} = 1$ whenever $\text{PMU}_n$ is a predecessor of $\text{PMU}_m$, while \eqref{eq:lp_relax_constraint2} specifies that $\text{PMU}_n$ is to be scheduled either before or after $\text{PMU}_m$. The matrix $\bm{\delta} = (\delta_{nm})$ is the adjacency matrix of a complete directed graph $\mathcal{G}_{\bm\delta}$. We introduce a constraint $\delta_{mn} + \delta_{nk} + \delta_{km} \leq 2$ to ensure that the graph $\mathcal{G}_{\bm\delta}$ contains no cycles. The constraint \eqref{eq:lp_relax_constraint2} implies that the aforementioned constraint is equivalent to \eqref{eq:lp_relax_constraint3}, which, therefore, ensures that $\mathcal{G}_{\bm\delta}$ contains no cycles. When all the constraints are satisfied, $\mathcal{G}_{\bm\delta}$ defines a complete ordering of the PMUs. The coefficient $p_nw_m$ or $\delta_{nm}$ in \eqref{eq:lp_relax1} denotes the cost of scheduling $\text{PMU}_n$ before $\text{PMU}_m$. The cost of scheduling $\text{PMU}_n$ before $\text{PMU}_m$ is denoted in terms of the cost matrix $\bm C \triangleq (c_{nm})$, where
\begin{eqnarray}
c_{nm} = \begin{cases}
             p_nw_m, ~\text{if}~\gamma_{mn} = 0, \\
             \infty, ~\text{if}~\gamma_{mn} = 1.
         \end{cases}
\label{eq:cost_matrix}
\end{eqnarray}
Whenever the precedence constraints specify that $\text{PMU}_m$ is a predecessor of $\text{PMU}_n$, we have $c_{nm} = \infty$. The setup \eqref{eq:lp_relax1} can now be written as
\begin{eqnarray}
\min \sum_n\sum_m c_{nm}\delta_{nm} + \sum_m p_m w_m,
\end{eqnarray}
subject to constraints specified by \eqref{eq:lp_relax_constraint2}, \eqref{eq:lp_relax_constraint3}, \eqref{eq:lp_relax_constraint4} and \eqref{eq:lp_relax_constraint5}. Each of the constraints in \eqref{eq:lp_relax_constraint2} is incorporated into a Lagrangian function with multipliers $\alpha_{nm}$ as follows
\begin{eqnarray}
\nonumber L^{(0)} &=& \sum_n\sum_m \left(c_{nm} - \alpha_{nm} - \alpha_{mn} \right)\delta_{nm}\\ && + \sum_{n}\sum_{m \neq n}\alpha_{nm} + \sum_m p_m w_m,
\label{eq:lagrange1}
\end{eqnarray}
which is to be minimized subject to \eqref{eq:lp_relax_constraint2}, \eqref{eq:lp_relax_constraint3}, \eqref{eq:lp_relax_constraint4} and \eqref{eq:lp_relax_constraint5}. We let
\begin{eqnarray}
\alpha_{nm} = \alpha_{mn} = \frac{1}{2}\min\{c_{nm}, c_{mn}\}, n \neq m,
\label{eq:setting_alpha}
\end{eqnarray}
so that the multipliers provide as large a contribution as possible to the lower bound. With this, we now define a reduced cost matrix $\bm{C}^{(0)} \triangleq (c_{nm}^{(0)})$, where $c_{nm}^{(0)} = c_{nm} - \alpha_{nm} - \alpha_{mn}$.

In the presence of cycles, constraints \eqref{eq:lp_relax_constraint2} and \eqref{eq:lp_relax_constraint3} can be used to derive cycle elimination constraints involving $q$ edges of the form
\begin{eqnarray}
\sum_{i=1}^{q}\delta_{n_i,n_{i+1}} \geq 1,
\label{eq:cycle_eliminate}
\end{eqnarray}
where $n_1 = n_{q+1}$ and $n_1,\dots,n_q$ correspond to $q$ different PMUs. Suppose that a Lagrangian relaxation of $r - 1$ of the constraints \eqref{eq:cycle_eliminate} are performed using multipliers $\beta^{(1)},\dots,\beta^{(r-1)}$ giving the following Lagrangian function
\begin{eqnarray}
\nonumber L^{(r-1)} &=& \sum_{n}\sum_{m}c_{nm}^{(r-1)}\delta_{nm} + \sum_{n}\sum_{m\neq n} \alpha_{nm} \\ && + \sum_{k=1}^{r-1}\beta^{(k)} + \sum_m p_m w_m.
\label{eq:lagrange_cycle1}
\end{eqnarray}
The constraint \eqref{eq:cycle_eliminate} is introduced into \eqref{eq:lagrange_cycle1} to get
\begin{eqnarray}
L^{(r)} &=& L^{(r-1)} + \beta^{(r)}\left(1 - \sum_{i=1}^{q}\delta_{n_i,n_{i+1}} \right), \label{eq:lagrange_cycle2} \\
\nonumber \beta^{(r)} &=& \min_{i \in \{1,\dots,q\}} \left\{c^{(r-1)}_{n_i, n_{i+1}} \right\}.
\end{eqnarray}
Choosing $\beta^{(r)}$ as large as possible retains the nonnegativity of the coefficients of the variables. The update equation for the reduced cost matrix is $\bm{C}^{(r)} = \bm{C}^{(r-1)} - \beta^{(r)}\bm{B}^{(r)}$, where the matrix $\bm{B}^{(r)} \triangleq (b_{nm}^{(r)})$ is defined as follows:
\begin{eqnarray}
b_{nm}^{(r)} = \begin{cases}
                1,~\text{if}~ n = n_i~\text{and}~ m = n_{i+1},\\
                0,~\text{otherwise},
               \end{cases}
\label{eq:update_matrix1}
\end{eqnarray}
where $i = 1,\dots,q$. Thus, we can write $L^{(r)}$ in the same form as $L^{(r-1)}$ as given by \eqref{eq:lagrange_cycle1}. Using $L^{(r)}$, a lower bound on the optimal solution of \eqref{eq:weighted_protime} can be obtained.

\begin{theorem}
The following is a lower bound for the sum of weighted completion times:
\begin{eqnarray}
\text{LB}^{(r)} = \sum_{n}\sum_{m \neq n}\alpha_{nm} + \sum_{i=1}^{r}\beta^{(i)} + \sum_{m}p_m w_m.
\label{eq:lower_bound}
\end{eqnarray}
\label{thm:lower_bound}
\end{theorem}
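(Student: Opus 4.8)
The plan is to read \eqref{eq:lower_bound} as an instance of weak Lagrangian duality, so that the only problem-specific work is to pin down two sign facts. Write $f(\bm\delta) = \sum_n\sum_m c_{nm}\delta_{nm} + \sum_m p_m w_m$ for the exact objective, and let $\bm\delta^\star$ be any schedule that is feasible for the $0$--$1$ program, i.e. one satisfying \eqref{eq:lp_relax_constraint2}--\eqref{eq:lp_relax_constraint5} together with every cycle-elimination inequality \eqref{eq:cycle_eliminate}. I would establish the sandwich
\[
\text{LB}^{(r)} \;\le\; L^{(r)}(\bm\delta^\star) \;\le\; f(\bm\delta^\star),
\]
and then specialise $\bm\delta^\star$ to an optimal schedule, for which $f(\bm\delta^\star)$ is exactly the optimum of \eqref{eq:weighted_protime}; transitivity then gives $\text{LB}^{(r)} \le \text{OPT}$, which is the assertion.

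For the right-hand inequality I would first verify that the multiplier terms in \eqref{eq:lagrange1} cancel exactly on feasible points, so that $L^{(0)}(\bm\delta^\star) = f(\bm\delta^\star)$. Using the symmetric choice \eqref{eq:setting_alpha} (so $\alpha_{nm}=\alpha_{mn}$) together with $\delta_{nn}=0$ from \eqref{eq:lp_relax_constraint5}, the variable contribution of the multipliers reorganises into $\sum_n\sum_{m\neq n}\alpha_{nm}\bigl(\delta_{nm}+\delta_{mn}\bigr)$, and the pairing constraint \eqref{eq:lp_relax_constraint2} collapses each bracket to $1$, leaving the constant $\sum_n\sum_{m\neq n}\alpha_{nm}$, which is precisely the additive term already present in \eqref{eq:lagrange1}; the two cancel. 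Next, each cycle-elimination relaxation contributes the term $\beta^{(k)}\bigl(1-\sum_{i}\delta_{n_i,n_{i+1}}\bigr)$, as in \eqref{eq:lagrange_cycle2}. Since $\bm\delta^\star$ obeys \eqref{eq:cycle_eliminate} the bracket is $\le 0$, and since $\beta^{(k)}\ge 0$ (established below) each such term is $\le 0$; telescoping from $L^{(0)}$ to $L^{(r)}$ then yields $L^{(r)}(\bm\delta^\star)\le L^{(0)}(\bm\delta^\star)=f(\bm\delta^\star)$.

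For the left-hand inequality I would use the reduced-cost form \eqref{eq:lagrange_cycle1}, namely $L^{(r)}(\bm\delta^\star)=\text{LB}^{(r)}+\sum_n\sum_m c^{(r)}_{nm}\delta^\star_{nm}$. Because $\delta^\star_{nm}\ge 0$, it suffices that every reduced cost remain nonnegative, $c^{(r)}_{nm}\ge 0$, and this is the step I expect to be the main obstacle; I would prove it by induction on $r$. The base case is the choice \eqref{eq:setting_alpha}: $c^{(0)}_{nm}=c_{nm}-2\alpha_{nm}=c_{nm}-\min\{c_{nm},c_{mn}\}\ge 0$. For the inductive step, the update $\bm C^{(r)}=\bm C^{(r-1)}-\beta^{(r)}\bm B^{(r)}$ decrements only the entries indexed by the cycle edges, by the definition \eqref{eq:update_matrix1} of $\bm B^{(r)}$, each by the common amount $\beta^{(r)}=\min_i c^{(r-1)}_{n_i,n_{i+1}}$; this is exactly the largest decrement that keeps all touched entries $\ge 0$, and in particular certifies the $\beta^{(r)}\ge 0$ used in the previous paragraph. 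Hence $c^{(r)}_{nm}\ge 0$ for all $n,m$, the discarded sum is nonnegative, and $L^{(r)}(\bm\delta^\star)\ge\text{LB}^{(r)}$.

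The delicate points are bookkeeping rather than depth: getting the orientation of the relaxed inequalities right, so that a nonnegative multiplier acting against a ``$\ge$'' constraint pushes the Lagrangian \emph{down} on feasible points (which is what makes $L^{(r)}$ a valid relaxation), and carrying the invariant $c^{(r)}_{nm}\ge 0$ intact through the cycle updates so that the dropped term $\sum_n\sum_m c^{(r)}_{nm}\delta^\star_{nm}$ can only help. I would close by observing that the bound holds for every $r$ and is in fact nondecreasing, since $\text{LB}^{(r)}=\text{LB}^{(r-1)}+\beta^{(r)}$ with $\beta^{(r)}\ge 0$.
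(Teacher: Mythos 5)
Your proof is correct and takes essentially the same route as the paper's: the paper's (much terser) argument also rests on the nonnegativity of the reduced costs $c^{(r)}_{nm}$, concluding that $L^{(r)}$ is minimized by setting $\delta_{nm}=0$ wherever $c^{(r)}_{nm}>0$, so that its minimum is the constant term $\text{LB}^{(r)}$. The weak-duality sandwich $\text{LB}^{(r)} \le L^{(r)}(\bm{\delta}^\star) \le f(\bm{\delta}^\star)$ and the induction establishing $c^{(r)}_{nm}\ge 0$ and $\beta^{(r)}\ge 0$ are exactly the details the paper leaves implicit and you make explicit.
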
\vspace{-0.1in}
\begin{proof}
From the aforementioned development, we see that $c_{nm}^{(r)} \geq 0$, $n, m = 1,\dots,N$ and $n \neq m$. Therefore, $L^{(r)}$ given by \eqref{eq:lagrange_cycle2} can be minimized by setting $\delta_{nm} = 0$ whenever $c_{nm}^{(r)} >  0$ which implies $c_{nm}^{(r)}\delta_{nm} =  0$. This proves \thrmref{thm:lower_bound}.
\end{proof}

From \eqref{eq:lower_bound}, it is clear that the lower bound comprises only cycle elimination constraints. Furthermore, increasing the lower bound amounts to introducing more such constraints which is equivalent to increasing the multipliers. In other words, to obtain the tightest lower bound, as many cycle elimination constraints should be introduced. The following theorem states that when no more such constraints can be found, the reduced cost matrix $\bm{C}^{(r)}$ can be used to find a feasible schedule of the PMUs.

\begin{theorem}
There exist variables satisfying the constraints \eqref{eq:lp_relax_constraint1} - \eqref{eq:lp_relax_constraint5} such that $\sum_n\sum_m c_{nm}^{(r)}\delta_{nm} = 0$ if and only if no additional constraint with a positive multiplier can be introduced into the Lagrangian function $L^{(r)}$ given by \eqref{eq:lagrange_cycle2}.
\label{thm:no_constraints}
\end{theorem}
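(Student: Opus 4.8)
The plan is to recast the statement as a purely graph-theoretic equivalence and then invoke the classical characterization of acyclic digraphs by topological orderings. Let $\mathcal{G}^{+}$ denote the digraph on vertex set $\mathcal{N}$ that contains the arc $(n,m)$ precisely when $c^{(r)}_{nm} > 0$, associated with the current reduced cost matrix $\bm{C}^{(r)}$. I would first record the invariant that, for every pair $\{n,m\}$, at least one of $c^{(r)}_{nm}, c^{(r)}_{mn}$ is zero: this holds after \eqref{eq:setting_alpha}, where $c^{(0)}_{nm} = c_{nm} - \min\{c_{nm}, c_{mn}\}$, and every later update $\bm{C}^{(r)} = \bm{C}^{(r-1)} - \beta^{(r)}\bm{B}^{(r)}$ with $\beta^{(r)} > 0$ only lowers arcs that already carry strictly positive cost, since $\beta^{(r)} = \min_i c^{(r-1)}_{n_i,n_{i+1}}$ is positive exactly when no arc of the chosen cycle is free. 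Consequently $\mathcal{G}^{+}$ has no $2$-cycles, and a positive-multiplier cycle-elimination constraint \eqref{eq:cycle_eliminate} can be introduced if and only if $\mathcal{G}^{+}$ contains a directed cycle. Thus the right-hand side of the claimed equivalence is exactly the assertion that $\mathcal{G}^{+}$ is acyclic.

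Next I would reinterpret the left-hand side. Because $c^{(r)}_{nm} \geq 0$ for all $n \neq m$ (shown in the proof of \thrmref{thm:lower_bound}), a feasible $\bm{\delta}$ attains $\sum_n\sum_m c^{(r)}_{nm}\delta_{nm} = 0$ if and only if it is a total order in which every arc of positive reduced cost is directed \emph{backward}, i.e. no arc $(n,m)$ with $c^{(r)}_{nm}>0$ has $\delta_{nm}=1$. For the \textbf{only-if} direction I would argue by contradiction: given such a total order and a hypothetical directed cycle $n_1 \to n_2 \to \dots \to n_q \to n_1$ of $\mathcal{G}^{+}$, each arc $(n_i, n_{i+1})$ would be backward, placing $n_{i+1}$ before $n_i$; chaining these places $n_1$ strictly before itself, which is impossible. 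Hence no positive cycle exists and $\mathcal{G}^{+}$ is acyclic.

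For the \textbf{if} direction I would construct the schedule explicitly. Assuming $\mathcal{G}^{+}$ is acyclic, its reverse $\overline{\mathcal{G}^{+}}$ is also acyclic and therefore admits a topological ordering $\pi$; set $\delta_{nm} = 1$ iff $n$ precedes $m$ in $\pi$. I would then verify \eqref{eq:lp_relax_constraint1}--\eqref{eq:lp_relax_constraint5}: the tournament is transitive by construction, so \eqref{eq:lp_relax_constraint2}--\eqref{eq:lp_relax_constraint5} hold; and since a precedence $\gamma_{nm}=1$ forces $c_{mn}=\infty$ by \eqref{eq:cost_matrix}, the arc $(m,n)$ lies in $\mathcal{G}^{+}$, so its reverse in $\overline{\mathcal{G}^{+}}$ makes $\pi$ place $n$ before $m$ and \eqref{eq:lp_relax_constraint1} is met. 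Finally, any forward arc $(n,m)$ of $\pi$ cannot be positive-cost, for otherwise $(n,m) \in \mathcal{G}^{+}$ would force $m$ before $n$ under $\pi$; hence $c^{(r)}_{nm}=0$ on all selected arcs and $\sum_n\sum_m c^{(r)}_{nm}\delta_{nm}=0$.

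The routine parts are the constraint checks and the appeal to the topological-ordering theorem. The main obstacle, and the step I would treat most carefully, is the equivalence of the first paragraph: establishing that ``no further positive-multiplier constraint'' is genuinely the same as acyclicity of $\mathcal{G}^{+}$. This hinges on the maintained one-free-direction-per-pair invariant, which excludes $2$-cycles that would otherwise be a spurious obstruction, and on reading $\beta^{(r)}>0$ as the condition that all arcs of a candidate cycle are strictly positive. Once that translation is secured, both implications collapse to the fact that a digraph is acyclic if and only if it possesses a topological order.
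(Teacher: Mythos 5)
Your proof is correct and follows essentially the same route as the paper's: both define the digraph whose arcs are the pairs with $c_{nm}^{(r)} > 0$, identify the existence of a positive-multiplier constraint with the existence of a directed cycle in that graph, and then pass between zero-cost feasible orderings and (reversed) topological orderings of the graph. Your version merely fills in details the paper leaves implicit, such as the one-zero-per-pair invariant and the explicit check of constraint \eqref{eq:lp_relax_constraint1} via the infinite-cost precedence arcs.
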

\begin{proof}
Consider a directed graph $\mathcal{G}$, where a vertex denotes a PMU, and if $c_{nm}^{(r)} > 0$ then there exists an edge between the vertex pair $(n, m)$. The variables satisfying \eqref{eq:lp_relax_constraint1} - \eqref{eq:lp_relax_constraint5} and $\sum_n\sum_m c_{nm}^{(r)}\delta_{nm} = 0$ define an ordering of the PMUs for which the order graph $\mathcal{G}_{\bm{\delta}}$ contains no edge of $\mathcal{G}$. In other words, the graph obtained by reversing all edges of $\mathcal{G}_{\bm{\delta}}$ contains all edges of $\mathcal{G}$. This implies $\mathcal{G}$ contains a partial order without cycles; therefore, no cycle elimination constraints with a positive multiplier exists. On the other hand, if no cycles exist in $\mathcal{G}$, then there exists an ordering consistent with the partial ordering in $\mathcal{G}$. The reverse ordering provides the values of the variables which satisfy \eqref{eq:lp_relax_constraint1} - \eqref{eq:lp_relax_constraint5} such that $\sum_n\sum_m c_{nm}^{(r)}\delta_{nm} = 0$. This proves \thrmref{thm:no_constraints}.
\end{proof}

If a schedule of PMUs can be found from $\bm{C}^{(r)}$ by defining values of the variables that satisfy the constraints \eqref{eq:lp_relax_constraint1} - \eqref{eq:lp_relax_constraint5}, then the sum of the weighted completion times for this schedule is an upper bound $\text{UB}^{(r)}$ computed as follows:
\begin{eqnarray}
\text{UB}^{(r)} = \text{LB}^{(r)} + \sum_{i=1}^{r}\beta^{(r)}\left(\sum_n\sum_m b_{nm}^{(i)}\delta_{nm} - 1\right),
\end{eqnarray}
where the matrix $\bm{B}^{(i)}$, $i = 1,\dots,r$ is defined by \eqref{eq:update_matrix1}.

\begin{corollary}
$\text{LB}^{(r)} = \text{UB}^{(r)}$ if and only if all constraints in the Lagrangian function are satisfied with equality.
\label{coro:lower_upper}
\end{corollary}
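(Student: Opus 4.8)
The plan is to reduce the claim to a complementary-slackness argument by writing the gap $\text{UB}^{(r)} - \text{LB}^{(r)}$ explicitly as a nonnegatively-weighted sum of the slacks of the relaxed cycle-elimination constraints. From the definition of $\text{UB}^{(r)}$ given just before the corollary, I would first subtract $\text{LB}^{(r)}$ (reading the summation index so that the multiplier is $\beta^{(i)}$) to obtain
\[
\text{UB}^{(r)} - \text{LB}^{(r)} = \sum_{i=1}^{r}\beta^{(i)}\left(\sum_n\sum_m b_{nm}^{(i)}\delta_{nm} - 1\right),
\]
where by \eqref{eq:update_matrix1} the inner double sum $\sum_n\sum_m b_{nm}^{(i)}\delta_{nm}$ is exactly the left-hand side $\sum_{j}\delta_{n_j,n_{j+1}}$ of the $i$-th cycle-elimination constraint \eqref{eq:cycle_eliminate}.

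The core of the argument is to show each summand is nonnegative, so that the gap is a sum of nonnegative terms. First, each multiplier $\beta^{(i)} \geq 0$ by construction: per \eqref{eq:lagrange_cycle2} it is the minimum of the reduced costs $c^{(i-1)}_{n_j,n_{j+1}}$, and the development preceding \thrmref{thm:lower_bound} guarantees these reduced costs remain nonnegative. Second, the ordering $\bm\delta$ recovered from $\bm C^{(r)}$ satisfies \eqref{eq:lp_relax_constraint1} through \eqref{eq:lp_relax_constraint5}, and since the cycle-elimination constraints \eqref{eq:cycle_eliminate} are derived from \eqref{eq:lp_relax_constraint2} and \eqref{eq:lp_relax_constraint3}, every such constraint holds, giving slack $\left(\sum_n\sum_m b_{nm}^{(i)}\delta_{nm} - 1\right) \geq 0$. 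Hence each product $\beta^{(i)}\left(\sum_n\sum_m b_{nm}^{(i)}\delta_{nm} - 1\right)$ is nonnegative, which already yields $\text{UB}^{(r)} \geq \text{LB}^{(r)}$ and reduces the equality $\text{LB}^{(r)} = \text{UB}^{(r)}$ to the simultaneous vanishing of all the individual summands.

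Both directions of the biconditional then follow quickly. For the reverse direction, if every cycle-elimination constraint holds with equality then each slack is zero, so the gap vanishes and $\text{LB}^{(r)} = \text{UB}^{(r)}$. For the forward direction, $\text{LB}^{(r)} = \text{UB}^{(r)}$ forces a sum of nonnegative terms to be zero, hence each term is zero; because a constraint is incorporated into $L^{(r)}$ only with a strictly positive multiplier $\beta^{(i)} > 0$ (exactly the positivity hypothesis underlying \thrmref{thm:no_constraints}), the vanishing of $\beta^{(i)}\left(\sum_n\sum_m b_{nm}^{(i)}\delta_{nm} - 1\right)$ forces the corresponding slack itself to vanish, i.e. the $i$-th constraint is tight.

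The main obstacle, and the point I would state with care, is the gap between the literal reading of the corollary (``all constraints satisfied with equality'') and the complementary-slackness conclusion that actually falls out (each product $\beta^{(i)}\cdot\text{slack}_i = 0$). These coincide precisely because every relaxed constraint carries a strictly positive multiplier; I would therefore make explicit that $\beta^{(i)} > 0$ for all $i = 1,\dots,r$, as ensured by the construction, so that a zero product unambiguously forces the slack — and not merely the multiplier — to be zero. Were zero multipliers admitted, $\text{LB}^{(r)} = \text{UB}^{(r)}$ would only guarantee tightness of the constraints with positive multipliers, and the statement would have to be weakened to the complementary-slackness form.
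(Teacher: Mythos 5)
Your proof is correct and follows exactly the route the paper intends: the paper states this corollary without any explicit proof, treating it as an immediate consequence of the displayed formula for $\text{UB}^{(r)}$, whose correction term (with the index typo $\beta^{(r)}$ read as $\beta^{(i)}$, as you did) is precisely the nonnegatively-weighted sum of constraint slacks that you analyze. Your extra care about strict positivity of the multipliers $\beta^{(i)}$ --- so that a vanishing product forces the slack itself to vanish rather than merely yielding a complementary-slackness statement --- is a genuine detail the paper glosses over, and your write-up makes the corollary's literal claim rigorous where the paper only implies it.
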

From \corolref{coro:lower_upper}, we see that the lower bound can be made tighter by using the constraints that are satisfied as inequalities. Let the PMUs be renumbered so that the schedule obtained from $\bm{C}^{(r)}$ is $(1,\dots,N)$, and therefore, $\bm{C}^{(r)}$ is a lower triangular matrix. Now, suppose that a constraint with multiplier $\beta$ satisfying the strict inequality has been found. If this $\beta$ is removed from the Lagrangian function, the entries of the matrix $\bm{C}^{(r)}$ increase by a factor of $\beta$. Thus, for $n < m$, each $\delta_{nm}$ produces an element $c_{nm}^{(r)} = \beta$ placed above the leading diagonal in $\bm{C}^{(r)}$. New cycle elimination constraints can be introduced into the Lagrangian function by using such $\delta_{nm}$ for which $c_{nm}^{(r)} > 0$ along with $\delta_{ij}$, $n \leq j < i \leq m$ for which $c_{ij}^{(r)} > 0$. If new cycle elimination constraints having multipliers which sum to $0 < \beta^{\ast} \leq \beta$ are introduced into the Lagrangian function for each of $\eta \geq 2$ variables $\delta_{nm}$ of the original constraint, then the original chosen constraint is reintroduced into the Lagrangian function with a reduced multiplier $\beta - \beta^{\ast}$, and the lower bound increases by $(\eta - 1)\beta^{\ast}$. By choosing $\eta$ and $\beta^{\ast}$ as large as possible, we can provide the maximum increment to the lower bound. Note that, the new Lagrangian function necessitates finding a new schedule such that $\bm{C}^{(r)}$ is a lower triangular matrix.

To find new constraints involving the variable $\delta_{nm}$ $(n < m)$ mentioned in the aforementioned paragraph, we present the following simple procedure: Consider a network of buses $n,\dots,m$, where $m$ and $n$ denote the source and sink, respectively. An edge exists between buses $i$ and $j$ with capacity $c_{ij}^{(r)}$ whenever $n \leq j < i \leq m$. A capacity of $\beta$ is placed on the source bus. The problem of finding the maximum flow from the source to sink buses is equivalent to that of generating new constraints. The maximum flow can be decomposed into different flows along paths from source to sink buses. The variables corresponding to the edges in a path from source to sink can be combined with $\delta_{nm}$ to produce cycle elimination constraints; the associated multiplier is given by the flow along that path. When $\bm{C}^{(r)}$ is reduced, the entries remain nonnegative, since the flow along any edge cannot exceed its capacity. Also, the lower bound is the best possible for the constraints obtained from the network flow problem. This is because, all constraints contain at least one variable corresponding to an edge which cuts across the minimum cut set, and the fact that the maximum flow is equal to the value of the minimum cut set.
\vspace{-0.1in}
\subsection{Method to achieve the lower bound}\label{subsec:method_lowerbound}
The first step is to compute $L^{(0)}$ given by \eqref{eq:lagrange1} which requires $\mathcal{O}(N^2)$ computations. The second step is to decide the cycle removing constraints and the order in which they are to be introduced into the Lagrangian function.

For the bound to be tight, from \corolref{coro:lower_upper} we know that any constraint that is added should be satisfied with equality, and is a suggestive rule indicating which constraint should be used. Given this condition, for small values of $q$  inequality \eqref{eq:cycle_eliminate} is satisfied with equality. Also, a schedule generated by a heuristic procedure indicates whether the constraint is satisfied as an equality. With this setup, we add all cycle elimination constraints with three edges in $\mathcal{O}(N^3)$ steps that are satisfied as equalities when the values of the variables correspond to heuristic schedule. This process is repeated for constraints with four edges in $\mathcal{O}(N^4)$ steps.

Lastly, a technique is needed to remove remaining cycles and to enable a schedule to be found from the reduced cost matrix $\bm{C}^{(r)}$. It is easy to see that any PMU $n$ with $c_{nm}^{(r)} = 0$ for all PMUs $m$ can be placed in the first unfilled position in the schedule, and row $n$ and column $n$ can be deleted from the reduced cost matrix. If there is a choice of PMUs that can be scheduled, the one that is scheduled first in the heuristic method is chosen. The process if repeated until no PMU can be scheduled in the first unfilled position of the schedule. In the event that a complete schedule has not been generated, it is possible to schedule PMUs in the final position of the schedule in a similar manner.

From \thrmref{thm:no_constraints}, it follows that when unscheduled PMUs remain, at least one additional constraint can be included in the Lagrangian function to tighten the lower bound. We pick a $\text{PMU}_n$ such that the number $m$ of unscheduled PMUs with $c_{nm}^{(r)} > 0$ is as small as possible. A search procedure involving $\mathcal{O}(N^2)$ steps can be invoked if a cycle elimination constraint with $\text{PMU}_n$ exists. In the absence of a cycle, another unscheduled PMU is similarly examined, and the process is repeated till all the PMUs are scheduled.

During implementation of the lower bound, we saw that $\mathcal{O}(N^4)$ steps are required for the computation of the lower bound. For the elimination of the cycles, the network flow problem, whose solution requires $\mathcal{O}(N^3)$ steps, is solved for each variable taking the value one in each constraint satisfied as a strict inequality. There are $\mathcal{O}(N^2)$ constraints each comprising $\mathcal{O}(N)$ variables. Therefore, we need $\mathcal{O}(N^6)$ steps if we are not forced to generate a new schedule during the network flow problem. Thus, it is clear that the number of steps required by the network flow is pseudopolynomial, provided a new schedule is generated so that the lower bound keeps increasing by one unit.
\vspace{-0.05in}
\section{Scheduling scheme}\label{sec:scheduling_scheme}
In this section, we describe the heuristic method used before the branch and bound algorithm, along with the branching rule and the search procedure.

We employ a simple heuristic method to obtain a schedule. Heuristics are used to indicate which constraints are to be used in the lower bound, and are not concerned with providing upper bounds to the optimization setup. Basically, we select a schedule at random that satisfies all the precedence constraints. To improve the heuristic, a group of PMUs is selected which consists of PMUs that are scheduled in consecutive positions in the given schedule such that there is a precedence constraint between each adjacent pair of PMUs. The PMUs of this group are removed from their current positions in the schedule and inserted elsewhere such that the new schedule satisfies all the precedence constraints in addition to having a smaller sum of weighted completion times than the previous sequence. The process is repeated until no further improvements are possible.

The central idea behind the branching rule is to reduce the difference between the lower and upper bounds computed at the bus from which we decide to branch. We search for constraints which satisfy the inequalities and the one which has the largest multiplier is selected. The variable $\delta_{nm}$ which appears in this constraint is selected so that $\delta_{nm}$ takes the value one, and such that $\text{PMU}_n$ is not a predecessor or a successor of $\text{PMU}_m$. The search tree with two branches are formed, such that in one branch $\text{PMU}_m$ is constrained to be scheduled before $\text{PMU}_n$ and in the other branch $\text{PMU}_n$ is constrained to be scheduled before $\text{PMU}_m$. The precedence constraints between other pairs of PMUs are directly implied by transitivity. The variable $\delta_{ij}$ is removed from the Lagrangian without altering the lower bound if $\text{PMU}_i$ is scheduled before $\text{PMU}_j$. In the first branch of the search tree, where $\text{PMU}_n$ is constrained to be scheduled after $\text{PMU}_m$, new schedules will emerge after the addition of more cycle elimination constraints. In the second branch, where $\text{PMU}_n$ is constrained to be scheduled before $\text{PMU}_m$, the constraint from which the variable $\delta_{nm}$ is originally found will be removed.

The goal of the search strategy is to specify from which bus to branch. A simple active node search is employed which uses backtracking to compute the reduced cost matrix from $\bm{C}^{(0)}$ and the list of constraints. Doing so, we avoid storing the reduced cost matrix at node of the search tree. In the following section, we perform computer simulations to validate the theoretical findings of this paper.

\vspace{-0.1in}
\section{Illustrative Example}\label{sec:experimental_results}
The algorithm was tested on different grids. For each grid, the optimal number of PMUs was obtained by solving the PMU placement problem, with the objective to achieve complete network observability ignoring zero injection measurements. For phasor data from each PMU, an integer processing time (in ms) from the uniform distribution [1, 50] is chosen. The algorithm was implemented using MATLAB$^\circledR$ on a Windows 7 PC, with a dual-core central processor, and 4.00 GB random access memory.

In the theoretical analysis carried out in the earlier sections of the paper, we had assumed knowledge of the weights and the precedence constraints. For the purpose of the experimental study presented in this section, we will utilize a simple procedures to derive the weight $w_n$, $n = 1,\dots,N$ of each PMU and the precedence constraints between the PMUs. This is being done for this illustrative example. As noted earlier, we expect that the precedence constraints will come naturally from the power grid application requirements.

The SVD of the bus admittance matrix is a useful measure of the electrical connectedness between various components in the power network. A thorough exposition on this topic was first reported in \cite{Wang2010}, where the authors introduce the term ``electrical centrality measure'' to quantify the degree of connectedness between buses in the grid. It was analytically demonstrated in \cite{Wang2010} that typical grid admittance matrices have singular values and vectors with only a small number of strong components, and performing SVD on the admittance matrix leads to no loss of connectivity-information.

The weights $w_n$, $n \in \mathcal{N}$ are obtained as follows:
\begin{enumerate}[1)]
    \item We obtain the optimal number $(N)$ of PMUs by solving the PMU placement problem. \label{step1}
    \item In the $B\times B$ bus admittance matrix, we choose the rows and columns which correspond to the bus numbers where PMUs are installed, leading to an $N \times N$ sub-matrix. \label{step2}
    \item We perform the SVD of the $N \times N$ sub-matrix to obtain the singular values and singular vectors. The $N \times 1$ left and right singular vectors are denoted $\bm{u}_n$ and $\bm{v}_n$, respectively, while the singular values are denoted $\sigma_n$. \label{step3}
    \item We compute the Euclidian norm of the vectors $\sigma_n\bm{u}_{n}$, and define the weight associated with the PMUs as follows:
        \begin{eqnarray}
        && w_n \triangleq \left\lceil\frac{||\sigma_n\bm{u}_{n}||}{N}\right\rceil,~~ n = 1, \dots, N.
        \label{eq:pmu_weight}
        \end{eqnarray}
        Note that, the index of each entry of the vector $\sigma_n\bm{u}_{n}$ corresponds to a bus location where a PMU is installed. \label{step5}
\end{enumerate}
\vspace{-0.1in}
The precedence constraints are obtained as follows:
\begin{enumerate}[1)]
\item In the vector $\sigma_1\bm{u}_{1}$, {\ie}, the first column of the $N \times N$ sub-matrix obtained as above, the PMU placed on the entry with the highest magnitude is given the highest precedence. $\bm{u}_{1}$ is the eigenvector corresponding to the largest eigenvalue.
        \label{transmit_first}
\item The procedure in Step \ref{transmit_first} is repeated for the remaining vectors $\sigma_n\bm{u}_{n}$, $n = 2,\dots,N$, where the $\bm{u}_{n}$s are picked in the decreasing order of the corresponding eigenvalues.  \label{conflict}
\end{enumerate}
However, in the aforementioned procedure there is a possibility of conflict. For example, consider two vectors $\sigma_1\bm{u}_{1}$ and $\sigma_3\bm{u}_{3}$. Suppose the entry having the largest magnitude in vector $\sigma_1\bm{u}_{1}$ is the same as the entry having the largest magnitude in vector $\sigma_3\bm{u}_{3}$. Then, the procedure picks the same PMU in both the vectors $\sigma_1\bm{u}_{1}$ and $\sigma_3\bm{u}_{3}$. To resolve this conflict, we propose the following modification: Pick the entry in the vector $\sigma_3\bm{u}_{3}$ having the \emph{second largest magnitude}. If this entry is not the same as the one in vector $\sigma_2\bm{u}_{2}$, then the PMU placed on that entry is given a lower precedence than the PMU obtained in the vector $\sigma_2\bm{u}_{2}$. This simple procedure is implemented for all the vectors $\sigma_n\bm{u}_{n}$.

We consider the IEEE 14-bus system to illustrate the procedure adopted for establishing the precedence constraints between the PMUs. We obtain the optimal number PMUs employing the topology of the grid \cite{Gou2008}, \cite{Gou2008a}. We consider a single time frame, and implement the following steps:
\begin{figure}[h]
\centering
\includegraphics[height=2.5in,width=3.25in]{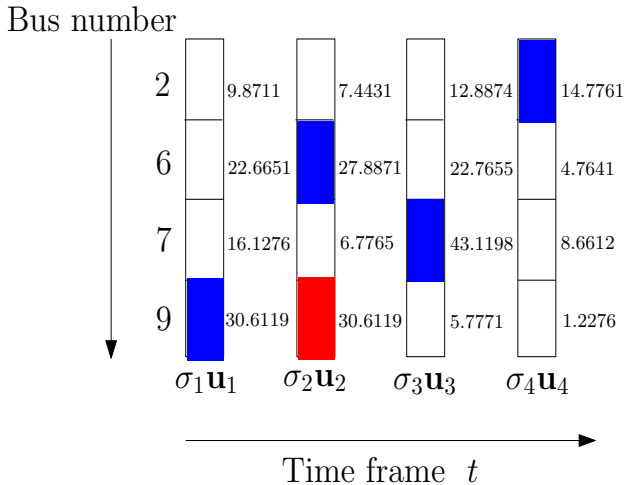}
\caption{Precedence between PMUs for the IEEE 14-bus network. The values of the elements of the vectors $\sigma_n\bm{u}_{n}$ are also indicated.}
\label{fig:pmu_precedence}
\end{figure}
\begin{enumerate}[(1)]
\item For the 14-bus network, an optimum of $N$ = 4 PMUs is to be placed on buses numbered 2, 6, 7 and 9 for complete network observability ignoring zero injection measurements (see \cite{Xu2004}, \cite{Gou2008}, \cite{Gou2008a}).

\item In the 14 $\times$ 14 bus admittance matrix, pick the rows and columns numbered 2, 6, 7 and 9, thereby yielding a 4 $\times$ 4 sub-matrix.

\item Perform the SVD of the 4 $\times$ 4 sub-matrix to obtain the 4 $\times$ 1 right and left singular vectors $\bm{u}_n$ and $\bm{v}_n$, respectively, and the singular values $\sigma_n$, $n$ = 1, \dots, 4. Compute the magnitude of the elements of the vectors $\sigma_n\bm{u}_{n}$. The index of each entry of the vector $\sigma_n\bm{u}_{n}$ corresponds to a bus location where a PMU is installed. The four vectors $\sigma_n\bm{u}_{n}$, $n$ = 1, \dots, 4 are depicted in \figref{fig:pmu_precedence}, where a column denotes a vector, while a box in each column denotes an entry of the vector. The magnitude of the elements of vectors $\sigma_n\bm{u}_{n}$ are also indicated. The number of boxes in each column equals the number $N$ of PMUs installed on the bus system.

\item In the vector $\sigma_1\bm{u}_{1}$ (the first column in \figref{fig:pmu_precedence}), the entry having the largest magnitude appears in the last row - marked in blue. Thus, the PMU placed on bus numbered 9 is given the highest precedence.

\item In the vector $\sigma_2\bm{u}_{2}$ (the second column), the entry having the largest magnitude appears in the last row, similar to that in the vector $\sigma_1\bm{u}_{1}$, again allocating the second highest precedence to the PMU placed on bus numbered 9 (conflict). However, this conflict is resolved by giving the second highest precedence to the PMU placed on the bus numbered 6, which has the second largest magnitude in the vector $\sigma_2\bm{u}_{2}$. In other words, $\text{PMU}_{9} > \text{PMU}_{6}$.

\item Continuing in this fashion, and employing the conflict-resolution strategy, the PMUs installed on buses 2, 6, 7 and 9 are given the precedence in that order, {\ie}, $\text{PMU}_{9} > \text{PMU}_{6} > \text{PMU}_{7} > \text{PMU}_{2}$.
\end{enumerate}
We thus have precedence constraints of the following form between the PMUs: For $n > m$, $\text{PMU}_n > \text{PMU}_m$, {\ie}, $\text{PMU}_n$ is designated to be transmitted before $\text{PMU}_m$.

We consider the IEEE-14, IEEE-30, IEEE-39, IEEE-57, IEEE-118, IEEE-300 and power flow data from two Polish electric grid systems. The bus and line data for the bus networks were obtained using MATPOWER$^\circledR$ \cite{Zimmerman2011}. In \tabref{tab:bus_networks}, we show the different bus networks considered in this paper along with the number of branches and the optimal number $(N)$ of PMUs to be installed on each of the grids for complete network observability ignoring zero injection measurements.
\begin{table}[h!]
\centering
\begin{tabular}{|c|c|c|}
  \hline
  \specialcell{Bus\\ network} & \specialcell{Number of\\branches} & \specialcell{Optimal $(N)$\\ PMUs} \\ \hline
  IEEE-14 & 20 & 4 \\ \hline
  IEEE-30 & 41 & 10 \\ \hline
  IEEE-39 & 46 & 13 \\ \hline
  IEEE-57 & 80 & 17 \\ \hline
  IEEE-118 & 186 & 32 \\ \hline
  IEEE-300 & 411 & 156 \\ \hline
  Polish-2737 & 3506 & 971 \\ \hline
  Polish-3375 & 4161 & 1384 \\
  \hline
\end{tabular}
\caption{Bus networks considered in this paper and the optimal $N$ PMUs obtained by solving the PMU placement problem \eqref{eq:pmuplacement}.}
\label{tab:bus_networks}
\end{table}

We compare the performance of our scheduling scheme with that of a simple greedy algorithm. We begin by considering a directed acyclic graph based on the precedence constraints which are assumed known. Then, we pick that PMU from the directed acyclic graph which has no parent to transmit, and then delete that PMU from the graph. If there are more than one PMU, we pick the one with the largest weight.

In \figref{fig:N_avgtime}, we show the plot of average computation time as a function of the number of PMUs installed on the grid for both the greedy algorithm and for the scheduling scheme devised in this paper. We see that, for less than 200 PMUs, our scheduling scheme has good performance with regard to computation time. However, when more number of PMUs are installed, the average computation time increases significantly. The reason for an increase in the average time for larger number of PMUs is basically computational in nature. Also, the performance of the scheduling scheme is comparable to that of the greedy algorithm for fewer number of PMUs. As the number of PMUs increases, the greedy algorithm behaves very poorly, which corroborates intuition; greedy algorithms approximate a global optimal solution in reasonable time for fewer instances of the problem, however, as the number of instances increases, the performance degrades. A similar behavior is observed in \figref{fig:N_nodes} when we plot the average number of nodes required in the search tree as a function of $N$.
\begin{figure}[h]
\centering
\includegraphics[height=2.5in,width=3.5in]{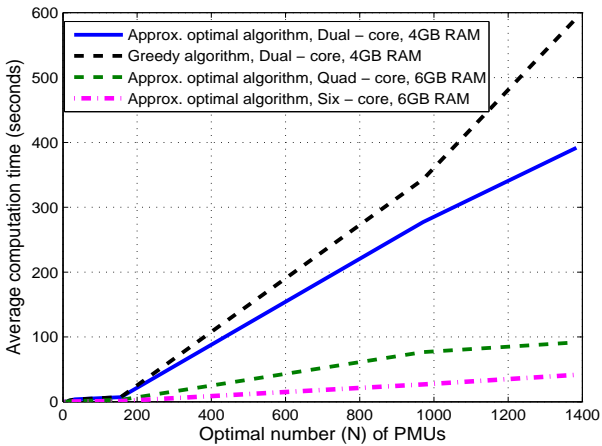}
\caption{Average computation time versus the optimal number $(N)$ of PMUs, using different computational resources.}
\label{fig:N_avgtime}
\end{figure}
\begin{figure}[h]
\centering
\includegraphics[height=2.5in,width=3.5in]{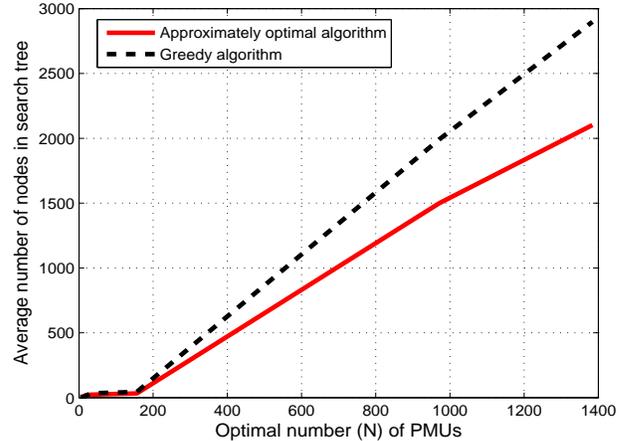}
\caption{Average number of nodes in the search tree versus the optimal number $(N)$ of PMUs.}
\label{fig:N_nodes}
\end{figure}

\begin{table}[h!]
\indent
\begin{tabular}{|c|c|c|c|c|c|}
\cline{3-6}
\multicolumn{1}{c}{} & \multicolumn{1}{c|}{} & \multicolumn{2}{c|}{\specialcell{Max. computation\\ time (sec)}} & \multicolumn{2}{c|}{\specialcell{Avg. computation\\ time (sec)}} \\
\hline
\specialcell{Bus\\ network} & \specialcell{Optimal $(N)$\\ PMUs} & {GA} & {AOA}  & {GA} & {AOA} \\
\hline
IEEE-14 & 4 & 0.182 & 0.312 & 0.061 & 0.091 \\ \hline
IEEE-30 & 10 & 0.371 & 0.598 & 0.112 & 0.132 \\ \hline
IEEE-39 & 13 & 0.612 & 0.861 & 0.158 & 0.198 \\ \hline
IEEE-57 & 17 & 1.125 & 2.761 & 2.118 & 1.41 \\ \hline
IEEE-118 & 32 & 18.112 & 5.871 & 3.547 & 3.87 \\ \hline
IEEE-300 & 156 & 23.129 & 11.651 & 7.242 & 7.182\\ \hline
Polish-2737 & 971 & -- & -- & 343.17 & 277.12 \\ \hline
Polish-3375 & 1384 & -- & -- & 591.71 & 391.65 \\
\hline
\end{tabular}
\caption{Maximum and average computation times. GA - greedy algorithm, AOA - approximately optimal algorithm developed in this paper.}
\label{tab:maximum_computation_time}
\end{table}
In \tabref{tab:maximum_computation_time}, we present the results of maximum and average computation times for the greedy algorithm and the proposed scheduling scheme. Here again, we see that, as the number of PMUs increases, the maximum computation time to obtain a feasible schedule increases, owing to the size of the problem. It is also of interest to note that, for the last two entries in the \tabref{tab:maximum_computation_time}, one cannot obtain the maximum computation time, since the average computation time and the average number of nodes that are calculated and plotted in \figref{fig:N_avgtime} and \figref{fig:N_nodes} represent lower bounds on the respective averages because algorithm does not converge in polynomial time.

To demonstrate the impact of computational resources on the performance of scheduling, we also tested our algorithm on quad-core and six-core processors with 6GB of random access memory. The resulting average computation time for varying $N$ is shown in \figref{fig:N_avgtime}, where we notice reduction in computation time as a function of the number of processors. We note that, our experiments are performed under laboratory conditions and is meant for illustrative purposes. In practice, the command center is endowed with far more powerful computational tools, which are indeed necessary for realtime operations. Using such tools, the performance of our scheduling scheme can be significantly improved, and can be fine-tuned to satisfy the stringent requirements on latency for transmission applications on the grid (see, for example, \cite{Kansal2012}). In order to avoid obscuring the main topic of the paper, we have not investigated computational implementation in greater detail. This work is a first step in the direction of devising scheduling algorithms for phasor data transmission, with scope for improvement in both theory and practice.

\vspace{-0.1in}
\section{Conclusion and future work}\label{sec:conclusion}
We have devised a scheduling algorithm for ordering transmission of phasor data from the substations to the control center in as short a time frame as possible on the hierarchical information architecture in the electric power grid. The problem was cast in the framework of classic single machine job scheduling with precedence constraints. The optimal number of PMUs was obtained by solving the PMU placement problem with the objective of complete network observability ignoring zero injection measurements. The weights associated with the PMUs and the precedence constraints between the PMUs were assumed known. The processing time of the PMUs were chosen uniformly at random from a pre-specified set of times. The problem is provably NP-hard, so approximation algorithms are typically used to obtain polynomial time complexity albeit being suboptimal. Using branching and bounding techniques, we arrive at a lower bound on the feasible schedule. The average computation time and the average number of nodes in the search tree was plotted as a function of the number of PMUs installed on the grid, and were compared with the performance of a simple greedy algorithm. As expected, for fewer PMUs on the grid, the performance of scheduling is favorable, however, for more than 200 PMUs, the computation time increases at an alarming rate. Future work would involve (i) devising more efficient cycle elimination techniques to reduce the gap between the lower and upper bounds and (ii) problem formulation for distributed scheduling to facilitate distributed functionality of the existing control centers \cite[Section I D]{Bose2010}.

\vspace{-0.1in}
\section*{Acknowledgement}
K. G. Nagananda would like to thank Chandra R. Murthy, at IISc, for providing the lab space during the course of this work. The work of Pramod Khargonekar was supported in part by the NSF under Grant CNS 12-39274. Sincere thanks to the anonymous referees for their invaluable suggestions.

\vspace{-0.05in}
\bibliographystyle{IEEEtran}
\bibliography{IEEEabrv,proposals}
\raggedbottom

\end{document}